
\documentclass[a4paper,UKenglish]{lipics}
 
\usepackage{microtype}
\usepackage{amsmath,amssymb}

\usepackage{gastex}
\usepackage{amsfonts}
\usepackage{amssymb}
\usepackage{amsmath}
\usepackage{latexsym}
\usepackage{amsthm}
\usepackage{eepic}
\usepackage{soul}


\bibliographystyle{plain}

\def\ket#1{|#1\rangle}
\def\bra#1{\langle#1|}

\title{Universal Entanglers for Bosonic and Fermionic Systems\footnote{This work was partially supported by NSERC and CIFAR}}
\titlerunning{Universal Entanglers for Bosonic and Fermionic Systems} 

\author[1,2]{Joel Klassen}
\author[3,2,4]{Jianxin Chen}
\author[3,2]{Bei Zeng}
\affil[1]{Department of Physics, University of Guelph\\
  50 Stone Road East, Guelph, Ontario, Canada\\
  \texttt{joeldavidklassen@gmail.com}}
\affil[2]{Institute for Quantum Computing\\
  200 University Avenue West, Waterloo, Ontario, Canada}   
\affil[3]{Department of Mathematics \& Statistics, University of Guelph\\
  50 Stone Road East, Guelph, Ontario, Canada\\
  \texttt{\{jianxinc,zengb\}@uoguelph.ca}}
\affil[4]{UTS-AMSS Joint Research Laboratory for Quantum Computation and Quantum Information Processing\\
Academy of Mathematics and Systems Science, Chinese Academy of Sciences, Beijing, China}
\authorrunning{J. Klassen et. al.} 

\Copyright[by]{Jianxin Chen, Joel Klassen, and Bei Zeng}

\subjclass{J.2 Physics}
\keywords{Universal Entangler, Bosonic States, Fermionic States}

\serieslogo{}
\volumeinfo
{Simone Severini (chair) \& Fernando Brandao (co-chair)}
  {2}
  {The 8th Conference on the Theory of Quantum Computation, Communication and Cryptography}
  {1}
  {1}
  {1}
\EventShortName{}
\DOI{10.4230/LIPIcs.xxx.yyy.p}

\begin{document}

\maketitle

\begin{abstract}
A universal entangler (UE) is a unitary operation which maps all pure product states to entangled states. It is known that
for a bipartite system of particles $1,2$ with a Hilbert space $\mathbb{C}^{d_1}\otimes\mathbb{C}^{d_2}$, a UE exists when $\min{(d_1,d_2)}\geq 3$ and $(d_1,d_2)\neq (3,3)$. It is also known that whenever a UE exists, almost all unitaries are UEs; however to verify whether a given unitary is a UE is very difficult since solving a quadratic system of equations is NP-hard in general. This work examines the existence and construction of UEs of bipartite bosonic/fermionic systems whose wave functions sit in the symmetric/antisymmetric subspace of $\mathbb{C}^{d}\otimes\mathbb{C}^{d}$. The development of a theory of UEs for these types of systems needs considerably different approaches from that used for UEs of distinguishable systems. This is because the general entanglement of identical particle systems cannot be discussed in the usual way due to the effect of (anti)-symmetrization which introduces ``pseudo entanglement'' that is inaccessible in practice. We show that, unlike the distinguishable particle case, UEs exist for bosonic/fermionic systems with Hilbert spaces which are symmetric (resp. antisymmetric) subspaces of $\mathbb{C}^{d}\otimes\mathbb{C}^{d}$ if and only if $d\geq 3$ (resp. $d\geq 8$). To prove this we employ algebraic geometry to reason about the different algebraic structures of the bosonic/fermionic systems. Additionally, due to the relatively simple coherent state form of unentangled bosonic states, we are able to give the explicit constructions of two bosonic UEs. Our investigation provides insight into the entanglement properties of systems of indisitinguishable particles, and in particular underscores the difference between the entanglement structures of bosonic, fermionic and distinguishable particle systems.
\end{abstract}

\section{Introduction}

Entanglement sits at the core of the counterintuitive and useful properties of quantum mechanics. At its inception Schr\"{o}dinger labeled entanglement ``the characteristic trait of quantum mechanics, the one that enforces its entire departure from classical lines of thought.'' \cite{Schrodinger} This observation remains true today, and with the advent of quantum computing, its practical consequences have never before been more real. However after decades of effort, entanglement remains poorly understood~\cite{nielsenchuang,AFOV08,HHH09}. A promising avenue for furthering our understanding of entanglement is cataloguing and analyzing the various means of generating it. There is a sense that those mechanisms which generate maximal amounts of entanglement, or most consistently generate entanglement, are especially enlightening because they serve as bounds on what can and can not be done, thus restricting our domain of inquiry. 

One outcome of this line of thought is the concept of a universal entangler (UE). A UE is a unitary operator which maps any non-entangled state to an entangled state~\cite{Chen:2008eq}. A UE can act as a useful tool, both theoretically and experimentally, due to its generality. This generality is derived from the fact that a UE admits any non-entangled quantum states. However this generality also makes demonstrating the properties of UEs very difficult. For instance, while it has been shown that UEs do exist for a system with Hilbert space $\mathbb{C}^{d_1}\otimes\mathbb{C}^{d_2}$ when $\min{(d_1,d_2)}\geq 3$ and $(d_1,d_2)\neq (3,3)$, proving this fact has been nontrivial, requiring techniques from algebraic geometry \cite{Chen:2008eq}. To date no elementary method is known which can achieve the same results. Additionally, although it has been shown that whenever UEs exist  almost all unitaries are UEs~\cite{Chen:2012}, explicit constructions of UEs remain ellusive. This is due to the fact that the problem of verifying whether a given unitary is a UE is in general intractable since the verification is equivalent to solving a quadratic system of equations which is hard in general \cite{FY80}. So far the only explicitly known UE is an example for the $(d_1,d_2)=(3,4)$, from an order $12$ Hadamard matrix~\cite{Chen:2012}. In general more advanced methods may be needed in order to construct UEs, as well as to verify their universality.

The theory of entanglement of systems of indistinguishable particles has garnered much attention during the past decade~\cite{SLM01,SCK+01,LZL+01,PY01,ESB+02,AFOV08}. The entanglement of systems of indistinguishable particles cannot necessarily be approached in the same way as the distinguishable particle case because the symmetry requirement of the wave functions (i.e. symmetrization for bosonic system and antisymmetrization for fermionic system) may introduce `pseudo entanglement' which is not accessible in practice \cite{ESB+02,SCK+01,SLM01,LZL+01,PY01}. It is now widely agreed that non-entangled states correspond to the coherent states $\ket{v}^{\otimes N}$~\cite{Puri:2001} for indistinguishable bosonic systems and to Slater determinants for indistinguishable fermionic systems~\cite{ESB+02,AFOV08}.
A natural line of inquiry is to identify the existence and construction of UEs for systems of indistinguishable particles. Indistinguishable bipartite bosonic/fermionic states are symmetric/antisymmetric states of the Hilbert space $\mathbb{C}^d\otimes\mathbb{C}^d$. This does not necessarily mean that the theory of UEs of distinguishable particles is readily generalizable to UEs for indistinguishable particles. Some obvious reasons for this are: 1. although almost all unitaries are UEs when $d>3$, the lack of understanding of explicit constructions prevents us from directly verifying whether there exist any UEs which are symmetric under particle permutation; 2. the definition of a non-entangled state for systems of indistinguishable fermions is dramatically different from that of systems of distinguishable particles (in fact, a single Slater determinant, when viewed as an antisymmetric distinguishable particle state, is indeed entangled).

This paper discusses the existence and construction of UEs for both indistinguishable bipartite bosonic (BUE) and fermionic (FUE) systems. Employing techniques in algebraic geometry, considering the different algebraic structures of the bosonic and fermionic systems, we show that, in contrast to the distinguishable particle case, BUEs exist for bosonic systems if and only if the single particle Hilbert space has dimension $d\geq 3$, and FUEs exist for fermionic systems if and only if the single particle Hilbert space has dimension $d\geq 8$. We also show, similarly to the distinguishable particle case, that for dimensions where BUEs/FUEs exist, almost all unitaries are BUEs/FUEs. Finally, because the unentangled states of indistinguishable bosonic systems are of a relatively simple coherent state form $\ket{v}\otimes\ket{v}$, which implies a hidden linear structure for the product states (i.e. the set of all single particle states $\ket{v}$ form a vector space), the construction of BUEs becomes significantly simpler. We have found a simple explicit construction of a BUE based on permutation matrices which holds for all $d \geq 3 $, and another one based on Householder-type gates \cite{horn1990matrix} which holds for all $d \geq 5$.  Unfortunately the explicit construction and verification of FUEs, like distinguishable particle UEs, remains a significantly more intractable problem. 


We believe that our investigation provides insight into the entanglement properties of identical particle systems, and in particular the different entanglement structures between bosonic, fermionic and distinguishable particle systems.

We organize our paper as follows. In section 2 we review some previously established results about UEs and provide some preliminaries about bosonic and fermionic systems to help establish our main results. In section 3 we give a proof for the existence and prevalence of BUEs, and give two explicit examples of their construction. In Section 4 we give a proof for the existence and prevalence of FUEs. Finally, in section 5, we provide a brief summary of our results and a discussion of future directions.


\section{Preliminaries}

This section provides preliminaries to help establish our main results for BUEs and FUEs. We first briefly review UEs for distinguishable particle systems established in~\cite{Chen:2008eq}. We then further briefly review basic entanglement theory for bosonic and fermionic systems.

\subsection{Universal entanglers}
For the case of distinguishable particles, it is known that any given quantum system is identified with some finite (or infinite) Hilbert space $\mathcal{H}$. Moreover, two unit vectors are indistinguishable if they differ only by a global phase factor. Hence, distinct pure states can be put in correspondence with ``rays'' in $\mathcal{H}$, or equivalently, points in the projective Hilbert space $\mathbb{P}(\mathcal{H})$. 

We consider pure states for bipartite systems, whose Hilbert space is $\mathbb{C}^{d_1}\otimes\mathbb{C}^{d_2}$. A bipartite
quantum state is a product state if $\ket{\psi}=\ket{\psi_1}\otimes\ket{\psi_2}$ for some $\ket{\psi_1}\in\mathbb{C}^{d_1}$
and $\ket{\psi_2}\in\mathbb{C}^{d_2}$. Otherwise, it is an entangled state. It is straightforward to see that the set of all the product states do not form a linear vector space, so one does not expect that the UE problem can be examined using basic tools from linear algebra. 

Instead, it is observed that the set of normalized product states in a composite system associated with $\mathbb{C}^{d_1}\otimes\mathbb{C}^{d_2}$ is isomorphic to a projective variety in $\mathbb{P}^{d_1d_2-1}$, a well studied object in algebraic geometry.  Before continuing, we need some basic notations and necessary background materials from algebraic geometry~\cite{Hartshorne:1983we}. 

For any positive integer $n$,  the set of all $n$-tuples from $\mathbb{C}$ is called an $n$-dimensional $\textit{affine space}$ over $\mathbb{C}$. An element of $\mathbb{C}^n$ is called a point, and if point $P=(a_1,a_2,\cdots,a_n)$ with $a_i\in \mathbb{C}$, then the $a_i$'s are called the coordinates of $P$.  Informally, an affine space is what is left of a vector space after forgetting its origin.

We define \textit{projective $n$-space}, denoted by $\mathbb{P}^n$, to be the set of equivalence classes of $(n+1)-$tuples $(a_0,\cdots,a_n)$ from $\mathbb{C}$, not all zero, under the equivalence relation given by $(a_0,\cdots,a_n)\sim(\lambda a_0,\cdots,\lambda a_n)$ for all $\lambda \in \mathbb{C}$, $\lambda\neq 0$. We use $[a_0:\cdots:a_n]$ to denote the projective coordinates of this point.

The \textit{polynomial ring} in $n$ variables, denoted by $\mathbb{C}[x_1,x_2,\cdots,x_n]$, is the set of polynomials in $n$ variables with coefficients in field $\mathbb{C}$.

A subset $Y$ of $\mathbb{C}^n$ is an \textit{algebraic set} if it is the common zeros of a finite set of polynomials $f_1,f_2,\cdots,f_r$ with $f_i\in \mathbb{C}[x_1,x_2,\cdots,x_n]$ for $1\leq i\leq r$, which is also denoted by $Z(f_1,f_2,\cdots,f_r)$.

One may observe that the union of a finite number of algebraic sets is an algebraic set, and the intersection of any family of algebraic sets is again an algebraic set. Therefore, by taking the open subsets to be the complements of algebraic sets, we can define a topology, called the \textit{Zariski topology} on $\mathbb{C}^n$.

A nonempty subset $Y$ of a topological space $X$ is called \textit{irreducible} if it cannot be expressed as the union of two proper closed subsets. The empty set is not considered to be irreducible.

An \textit{affine algebraic variety} is an irreducible closed subset of $\mathbb{C}^n$, with respect to the induced topology.

A notion of algebraic variety may also be introduced in projective spaces, called projective algebraic variety: a subset $Y$ of $\mathbb{P}^n$ is an \textit{algebraic set} if it is the common zeros of a finite set of homogeneous polynomials $f_1,f_2,\cdots,f_r$ with $f_i\in \mathbb{C}[x_0,x_1,\cdots,x_n]$ for $1\leq i\leq r$. We call open subsets of irreducible projective varieties quasi-projective varieties.

Observe that a product state in $\mathbb{C}^{d_1}\otimes \mathbb{C}^{d_2}$ can be written as the Kronecker product of a vector $v_1\in \mathbb{C}^{d_1}$ and another vector  $v_2\in \mathbb{C}^{d_2}$. Let's further write these vectors in the computational basis, say $v_1=\left(
\begin{array}{cccc}
x_1 ,& x_2, & \cdots &, x_{d_1} 
\end{array}
\right)$ and $v_2=\left(
\begin{array}{cccc}
y_1,  & y_2, & \cdots &, y_{d_2} 
\end{array}
\right)$. Their product state is a $d_1d_2$-dimensional vector
 
\begin{eqnarray*}
&\left(
\begin{array}{ccccccc}
z_1 , &  z_2, & \cdots  & z_{d_2}, & z_{d_2+1}, & \cdots &, z_{d_1d_2}   
\end{array}
\right)\\
=&\left(
\begin{array}{ccccccc}
x_1y_1, &  x_1y_2 , & \cdots&x_1y_{d_2},&x_2y_1,&\cdots&,x_{d_1}y_{d_2}  
\end{array}
\right)
\end{eqnarray*}

 Hence $z_{(i-1)d_2+j}=x_iy_j$ for any $1\leq i\leq d_1, 1\leq j\leq d_2$. It follows that 
 \begin{eqnarray*}
 z_{(i_1-1)d_2+j_1}z_{(i_2-1)d_2+j_2}&=&z_{(i_1-1)d_2+j_2}z_{(i_2-1)d_2+j_1}
 \end{eqnarray*}
 for any $1\leq i_1,i_2 \leq d_1, 1\leq j_1,j_2\leq d_2$. On the other hand, any $d_1d_2$-dimensional vector $(z_k)_{k=1}^{d_1d_2}$ satisfying the above polynomials can be written as the tensor product of $v_1\in \mathbb{C}^{d_1}$ and $v_2\in \mathbb{C}^{d_2}$~\cite{Hartshorne:1983we}. This implies that the set of normalized product states in $\mathbb{C}^{d_1}\otimes\mathbb{C}^{d_2}$ is isomorphic to a projective variety in $\mathbb{P}^{d_1d_2-1}$ which is called a ``Segre variety'' and denoted as $\Sigma_{d_1,d_2}$. This simple observation provides an algebraic geometric description of product states and entangled states. 

Therefore, a unitary operator $U$ acting on $\mathbb{C}^{d_1}\otimes \mathbb{C}^{d_2}$ is a UE if and only  
\begin{equation*}
U(\Sigma_{d_1,d_2})\bigcap \Sigma_{d_1,d_2}=\emptyset.
\end{equation*} 
From the geometric point of view, a UE will rotate the set of product states to another set which is completely void of product states.

In~\cite{Chen:2008eq}, it is proved that UEs exist if and only if $\min\{d_1,d_2\}\geq 3$ and $(d_1,d_2)\neq (3,3)$.  Surprisingly, it is further illustrated that a random unitary operator acting on such a bipartite system will even rotate the set of product states to another set which contains nothing but nearly maximally entangled states~\cite{Chen:2012}. 

Although it has been shown that a random unitary gate will almost surely be a UE of a bipartite quantum system  $\mathbb{C}^{d_1}\otimes \mathbb{C}^{d_2}$ if $\min\{d_1,d_2\}\geq 3$ and $(d_1,d_2)\neq (3,3)$, constructing an explicit UE for any bipartite quantum system is not that easy. One simple strategy is to randomly pick a unitary gate acting on $\mathbb{C}^{d_1}\otimes \mathbb{C}^{d_2}$, and then verify whether it is a UE by solving a family of polynomial equations. Unfortunately, there is no known efficient way to solve quadratic polynomial systems~\cite{FY80}. So far, explicit UEs are only known for $(d_1,d_2)=(3,4)$~\cite{Chen:2012}.

\subsection{Bosonic systems}

It is known that bosonic states lie in the $2$nd symmetric tensor power of $\mathbb{C}^d$, denoted by $\vee^2 \mathbb{C}^d$. A state in $\vee^2 \mathbb{C}^d$ is a product state if it can be written as some $\ket{\alpha}\otimes \ket{\alpha}$, i.e. it is a coherent state~\cite{PY01,ESB+02}. Any state which cannot be written as such a symmetric product form does demonstrate correlation which can be potentially used in quantum information processing~\cite{PY01}, and hence is considered entangled.

Any bipartite bosonic pure state is local unitarily equivalent to 
$\sum_{\alpha}\lambda_{\alpha}\ket{\alpha}\otimes\ket{\alpha}$~\cite{LZL+01,PY01}. 
This then indicates a hidden linear structure for bipartite bosonic pure states because the single particles states
$\ket{\alpha}$ form a vector space. 

From the algebraic geometric point of view, any bosonic product state $\ket{\alpha}\otimes \ket{\alpha}$ can be written as a vector with projective coordinates 
\begin{equation*}
[a_1a_1:a_1a_2:\cdots : a_1a_d: a_2a_1:a_2a_2:\cdots :a_2a_d: a_3a_1:\cdots :a_da_d]
\end{equation*}
where $[a_1:\cdots :a_d]$ are the projective coordinates of $\ket{\alpha}$.

Such points can be characterized by a family of polynomials again. In fact, the set of projective points with coordinates
\begin{equation*}
[a_1a_1:a_1a_2:\cdots : a_1a_d: a_2a_1:a_2a_2:\cdots :a_2a_d: a_3a_1:\cdots :a_da_d]
\end{equation*}
is obviously isomorphic to the set of the following points
\begin{equation*}
[a_1^2:a_2^2:\cdots : a_d^2: a_1a_2:a_1a_3:\cdots :a_1a_d: a_2a_3:\cdots :a_{d-1}a_d]
\end{equation*}
which is known as the Veronese variety in algebraic geometry \cite{Harris92}. 

Hence the set of bosonic product states corresponds to a special case of Veronese variety whose dimension is $d-1$. This fact will be used in our further investigation.

\subsection{Fermionic systems}

Consider the pure states of a bipartite fermionic system whose Hilbert space is the antisymmetric subspace of $\mathbb{C}^d\otimes\mathbb{C}^d$. The Pauli exclusion principle requires that $d\geq 2$. We denote  the $2$nd exterior 
power of $\mathbb{C}^d$, i.e. the antisymmetric subspace of $\mathbb{C}^d\otimes\mathbb{C}^d$ by $\wedge^2 \mathbb{C}^d$. For any $\ket{\alpha},\ket{\beta}\in \mathbb{C}^d$, we use the notation
\begin{equation}
\ket{\alpha}\wedge\ket{\beta}=\frac{1}{\sqrt{2}}(\ket{\alpha}\otimes\ket{\beta}-\ket{\beta}\otimes\ket{\alpha}),
\end{equation}
to denote a single Slater determinant.

A quantum state $\ket{\psi}$ in $\wedge^2 \mathbb{C}^{d}$ is said to be decomposable if it can be written as an exterior product of individual vectors from $\mathbb{C}^{d}$, i.e. there exists $\ket{\alpha},\ket{\beta}\in \mathbb{C}^d$ such that $\ket{\psi}=\ket{\alpha}\wedge\ket{\beta}$. Decomposable states are considered unentangled, as any correlation results purely
from the fermionic statistics, and so is not useful for quantum information processing~\cite{SLM01,SCK+01}. Any state which cannot be written in such a decomposable form does demonstrate correlation which can be potentially used in quantum information processing~\cite{SLM01,SCK+01} , and hence is considered to be entangled.

Any bipartite fermionic pure state is local unitarily equivalent to 
$\sum_{\alpha}\lambda_{i}\ket{\alpha_i}\wedge\ket{\beta_i}$~\cite{SLM01,SCK+01},
where $\ket{\alpha_i},\ket{\beta_i}\in \mathbb{C}^d$, $\langle{\alpha_i}|\beta_j\rangle=0$, $\langle\alpha_i|\alpha_j\rangle=\delta_{ij}$, and $\langle\beta_i|\beta_j\rangle=\delta_{ij}$.
This is an analogue of the Schmidt decomposition of a distinguishable particle 
system and hence is called the Slater decomposition. Similarly to the distinguishable particle 
case, the set of all decomposable states do not form a linear vector space, 
so one does not expect that the FUE problem can be examined using basic tools from linear algebra.

Again, let's look over the decomposable (or fermionic product) states from the algebraic geometric point of view. As we showed before, a decomposable state can be written as $\ket{\psi}=\ket{\alpha}\wedge\ket{\beta}$ where $\ket{\alpha}$ and $\ket{\beta}$ are two vectors in $\mathbb{C}^d$. Let $S_{\psi}$ be the $2$-dimensional subspace spanned by $\ket{\alpha}$ and $\ket{\beta}$. A different basis for $S_{\psi}$ will give a different exterior product, but the two exterior products will differ only by a nonzero scale. Ignoring the nonzero scale, any decomposable state corresponds to a $2$-dimensional subspace in $\mathbb{C}^d$ and vice versa. Hence the set of decomposable states is isomorphic to the set of $2$-dimensional subspaces which is known as a Grassmannian $G(2,d)$ \cite{Harris92}. It is not that obvious that $G(2,d)$ can be characterized by a set of polynomials, but it can be. The correspondence we have just shown is known as the Pl\"ucker embedding of a Grassmannian into a projective space:
\begin{equation*}
\tau: G(2,d) \rightarrow \mathbb{P}(\wedge^2 \mathbb{C}^d).
\end{equation*} 
This embedding satisfies certain simple quadratic polynomials and is called the Grassmann-Pl\"ucker relations (see e.g. p.
A III.172 Eq. (84-(J,H)) in \cite{Bourbaki:1970}, Prop 11-32 in \cite{Hasset:2007}, and \cite{ESB+02}). This implies the Grassmannian embeds as an algebraic variety of $\mathbb{P}(\wedge^2 \mathbb{C}^d)$.

\section{Bosonic Universal Entanglers}

\subsection{Existence and Prevalence}

Recall that a bosonic state in $\vee^2 \mathbb{C}^d$ is a product state if it can be written as $\ket{\alpha} \otimes \ket{\alpha}$ for some $\ket{\alpha}$. A quantum gate acting on $\vee^2\mathbb{C}^d$ is said to be a bosonic universal entangler (BUE) if it will map every product state to some entangled state.

Note that the set of product states of a bosonic system can also be characterized by a set of polynomials. Indeed, let $\Lambda=\{\ket{\alpha}\otimes \ket{\alpha}: \ket{\alpha}\in \mathbb{C}^d\}$, this is a precisely the Veronese variety \cite{Harris92}. Furthermore, $\Lambda$ is isomorphic to $\mathbb{C}^d$. For any $\ket{\psi}\in \vee^2\mathbb{C}^2$, let us denote $\text{rank} \ket{\psi}\equiv \min\{r: \ket{\psi}=\sum\limits_{i=1}^r \ket{a_i}\ket{a_i}\}$.

\begin{theorem}\label{theorem:bosonic}
There is a BUE acting on $\vee^2\mathbb{C}^d$ if and only if $d\geq 3$. Furthermore, when $d\geq 3$, almost every quantum gate acting on $\vee^2\mathbb{C}^d$ is a BUE.
\end{theorem}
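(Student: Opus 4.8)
The plan is to recast the BUE property geometrically and then run a dimension count pitting the Veronese variety $\Lambda$ against the unitary group. Writing $N=\dim(\vee^2\mathbb{C}^d)=\binom{d+1}{2}$, the set of bosonic product states is the projectivized Veronese variety $\Lambda\subset\mathbb{P}^{N-1}$, which is smooth of complex dimension $d-1$ since it is the degree-two embedding of $\mathbb{P}^{d-1}$. A unitary $U$ is a BUE exactly when $U(\Lambda)\cap\Lambda=\emptyset$, because this says precisely that no product state is sent to a product state. Note that a unitary, being an invertible linear map, acts as a projective linear isomorphism, so $U(\Lambda)$ is again a projective variety of dimension $d-1$.

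For the ``only if'' direction I would dispose of $d\le 2$ directly. The case $d=1$ is degenerate, as $\mathbb{P}^{0}$ is a single point. For $d=2$ we have $N=3$, so $\Lambda$ and $U(\Lambda)$ are both curves (complex dimension $1$) inside $\mathbb{P}^{2}$. The projective dimension theorem -- two subvarieties $X,Y\subseteq\mathbb{P}^{n}$ with $\dim X+\dim Y\ge n$ always meet -- then forces $U(\Lambda)\cap\Lambda\neq\emptyset$ for every $U$, so no BUE exists and the threshold $d\ge 3$ is necessary.

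For the ``if'' and prevalence directions ($d\ge 3$) I would form the incidence variety
\[
\mathcal{I}=\{(x,U)\in\Lambda\times U(N): Ux\in\Lambda\}
\]
and show that its image $B=\{U: U(\Lambda)\cap\Lambda\neq\emptyset\}$ under projection to $U(N)$ has measure zero. The clean way to bound $\dim\mathcal{I}$ is through the other projection $(x,U)\mapsto(x,Ux)$ into $\Lambda\times\Lambda$: since $U(N)$ acts transitively on $\mathbb{P}^{N-1}$ with point stabilizer $U(1)\times U(N-1)$, every fiber $\{U:Ux=y\}$ is a coset of a stabilizer, of real dimension $(N-1)^{2}+1$. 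Hence $\mathcal{I}$ is a fiber bundle over $\Lambda\times\Lambda$ and
\[
\dim_{\mathbb{R}}\mathcal{I}=2\cdot 2(d-1)+\bigl((N-1)^{2}+1\bigr)=N^{2}-(d-1)(d-2).
\]
Because $B=\pi_{U(N)}(\mathcal{I})$ is the image of $\mathcal{I}$ under a smooth map, and $\dim_{\mathbb{R}}\mathcal{I}<N^{2}=\dim_{\mathbb{R}}U(N)$ precisely when $(d-1)(d-2)>0$, i.e. $d\ge 3$, the set $B$ has measure zero. Thus almost every $U$ is a BUE and, in particular, BUEs exist, giving both the sufficiency of $d\ge 3$ and the prevalence claim.

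The main obstacle is bridging the complex algebraic geometry in which $\Lambda$ naturally lives and the real Lie group $U(N)$: all dimension bookkeeping must consistently use real dimensions ($\dim_{\mathbb{R}}U(N)=N^{2}$ and $\dim_{\mathbb{R}}\Lambda=2(d-1)$), and one must justify that $\mathcal{I}$ is genuinely a smooth manifold of the computed dimension so that the principle ``a smooth image of a strictly lower-dimensional manifold has measure zero'' applies. Establishing the fiber-bundle (equivalently, uniform fiber-dimension) structure of the projection $\mathcal{I}\to\Lambda\times\Lambda$ is the step demanding the most care; once it is in place, the elementary inequality $(d-1)(d-2)>0$ delivers exactly the threshold $d\ge 3$.
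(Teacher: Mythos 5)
Your proposal is correct, and both directions match the paper's conclusions, but the sufficiency/prevalence argument takes a genuinely different route. The necessity step for $d\le 2$ is the same projective dimension-theorem count the paper uses ($2(d-1)\ge \binom{d+1}{2}-1$ forces $U(\Lambda)\cap\Lambda\neq\emptyset$). For $d\ge 3$, however, the paper stays in the algebraic category: it passes from $\mathcal{U}(\vee^2\mathbb{C}^d)$ to $GL(\vee^2\mathbb{C}^d)$ using Zariski density, bounds the dimension of the Zariski closure of the bad set $\mathcal{X}'$ by $\binom{d+1}{2}^2-(\binom{d+1}{2}-1)+2(d-1)$ via an incidence correspondence and the fiber-dimension theorems for morphisms of varieties (Lemma~\ref{lemma:closure}), and then recovers the measure-zero statement on the unitary group by decomposing $\overline{\mathcal{X}'}\cap\mathcal{U}$ into finitely many lower-dimensional submanifolds. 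You instead work entirely in the real smooth category: the incidence set $\mathcal{I}\subset\Lambda\times U(N)$ is the preimage of $\Lambda\times\Lambda$ under the submersion $(x,U)\mapsto (x,Ux)$ (a submersion because $U(N)$ acts transitively on $\mathbb{P}^{N-1}$), so it is a genuine smooth manifold whose real dimension $N^2-(d-1)(d-2)$ you compute from the stabilizer $U(1)\times U(N-1)$; the bad set is then a smooth image of a strictly lower-dimensional compact manifold, hence null. The two computations encode the same inequality --- your $(d-1)(d-2)>0$ is exactly the paper's $\binom{d+1}{2}-1>2(d-1)$ --- so the threshold $d\ge 3$ is identical. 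What your route buys is the avoidance of the Zariski-density detour and of the irreducibility caveats the paper must address in its appendix; what it costs is the obligation you correctly flag, namely verifying the submersion (equivalently, constant-fiber-dimension) structure so that $\mathcal{I}$ really is a manifold of the computed dimension, which follows cleanly from transitivity of the projective unitary action. Both arguments are sound.
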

\begin{proof}
For $d\leq 2$, we have
\begin{eqnarray}
\dim U(\Lambda)+ \dim \Lambda=2\dim \Lambda=2(d-1)\geq {d+1 \choose 2}-1=\dim \mathbb{P}(\vee^2 \mathbb{C}^d).
\end{eqnarray}

This implies there is no BUE for $\vee^2 \mathcal{C}^d$. This assertion follows from the dimension counting theorem which states that the intersection of any two projective varieties $\mathcal{A}$ and $\mathcal{B} \subseteq \mathbb{P}^m$ is nonempty if $\dim \mathcal{A}+\dim \mathcal{B}\geq m$. More specifically, we have $U(\Lambda)\bigcap \Lambda\neq \emptyset$.

On the other hand, consider the set of quantum gates acting on a system of two indistinguishable bosons. Any quantum gate acting on this system should be a symmetric gate, i.e., $S U S=U$, where $S$ is the swap operator.  Equivalently, $U$ is a quantum gate acting on $\vee^2 \mathbb{C}^d$.

Let $\mathcal{X}=\{\Phi|\Phi\in \mathcal{U}(\vee^2 \mathbb{C}^d),\Phi(\Lambda)\cap \Lambda\neq \emptyset\}$. Our aim is to show that $\mathcal{X}$ is a proper subset of $\mathcal{U}(\vee^2 \mathbb{C}^d)$. If this is so, then the existence of BUEs will be automatically guaranteed. 

Let's consider the Zariski topology on the projective space. In this setting, the unitary group $\mathcal{U}(\vee^2 \mathbb{C}^d)$ is Zariski dense in the general linear group $GL(\vee^2 \mathbb{C}^d)$~\cite{Schmitt:2008ww}. We further define $\mathcal{X}^{\prime}=\{\Phi|\Phi\in GL(\vee^2\mathbb{C}^d),\Phi(\Lambda)\cap \Lambda\neq \emptyset\}$. It is easy to see that $\mathcal{X}\subseteq \mathcal{X}^{\prime}$. 

The dimension of its  Zariski closure $\dim \overline{\mathcal{X}^{\prime}}$ is bounded by ${d+1\choose 2}^2-({d+1\choose 2}-1)+2(d-1)$. See Lemma~\ref{lemma:closure} in Appendix~\ref{appendix:closure} for details.

Now we prove the existence of a BUE as follows. If $U$ is not a BUE, $\mathcal{U}(\vee^2 \mathbb{C}^d)\subset \mathcal{X}^{\prime}$, then $GL(\vee^2\mathbb{C}^d)=\overline{\mathcal{U}(\vee^2\mathbb{C}^d)}\subset{\overline{\mathcal{X}^{\prime}}}$. However,  $\dim(\overline{\mathcal{X}^{\prime}})\leq {d+1\choose 2}^2-({d+1\choose 2}-1)+2(d-1) < {d+1\choose 2}^2 = \dim GL(\vee^2\mathbb{C}^d)$. This is a contradiction. So $\mathcal{U}(\vee^2\mathbb{C}^d)\not\subset \mathcal{X}^{\prime}$, i.e. a unitary operator $\Phi\in\mathcal{U}(\vee^2\mathbb{C}^d)$ with universal entangling power exists.

We will now show that $\mathcal{X}$ is not only a proper subset, but also a negligible subset of $\mathcal{U}(\vee^2\mathbb{C}^d)$.

$\mathcal{U}(\vee^2\mathbb{C}^d)$ is a locally compact Lie group of dimension ${d+1\choose 2}^2$. Recall that $\dim(\overline{\mathcal{X}^{\prime}})$ is at most ${d+1\choose 2}^2-({d+1\choose 2}-1)+2(d-1)<{d+1\choose 2}^2=\dim(\mathcal{U}(\vee^2\mathbb{C}^d))$. 

We have shown $\dim(\overline{\mathcal{X}^{\prime}}) <{d+1\choose 2}^2 = \dim(\mathcal{U}(\vee^2\mathbb{C}^d))$.  $\overline{\mathcal{X}^{\prime}}$ is Noetherian (i.e. any descending sequence of its closed subvarieties is stationary), then $\overline{\mathcal{X}^{\prime}}$ is a union of finitely many smooth subvarieties of $GL(\vee^2\mathbb{C}^d)$ with lower dimensions. Hence $\overline{\mathcal{X}^{\prime}}\cap \mathcal{U}(\vee^2\mathbb{C}^d)$ (which contains $\mathcal{X}^{\prime}\cap \mathcal{U}(\vee^2\mathbb{C}^d)$, the set
of our main interest) is a union of finite many submanifolds of $\mathcal{U}(\vee^2\mathbb{C}^d)$ with lower dimensions. Therefore, $\mathcal{X}^{\prime} \cap \mathcal{U}(\vee^2\mathbb{C}^d)$ is measure zero in $\mathcal{U}(\vee^2\mathbb{C}^d)$ which implies that a random unitary operator $U$ is almost surely a BUE.

\end{proof}


\subsection{Explicit Construction}

As we have shown in Theorem~\ref{theorem:bosonic}, a random unitary acting on $\vee^2\mathbb{C}^d$ will almost surely be a BUE. Hence we can pick an arbitrary unitary acting on $\vee^2\mathbb{C}^d$ and verify whether it will map some product state to another product state.  Recall that the set of product states in a bosonic system is isomorphic to $\mathbb{C}^d$. This will make it easier to verify whether a unitary is a BUE. Here we provide verifications of two different classes of BUEs.

\subsubsection{Householder-type Bosonic Universal Entanglers}

For $d\geq 5$ and any subspace $S\subset \vee^2\mathbb{C}^d$, let's consider the following gate $U=\mathbb{I}_{\vee^2\mathbb{C}^d}-2P_S$ where $P_S$ is a projection to some subspace $S$. These gates are known as Householder matrices in linear algebra  \cite{horn1990matrix} and they are widely used to perform QR decomposition.

A gate $U$ constructed in this way will be a BUE if the subspace $S$ is chosen properly to satisfy the following two constraints:
\begin{enumerate}
\item [1.] There is no product state in $S^{\perp}$.
\item [2.] $\text{rank} \ket{\psi}\geq 3$ for any $\ket{\psi}\in S$. 
\end{enumerate}

This claim can be proved by contradiction. Assume there are two product states $\ket{\psi}\ket{\psi}$ and $\ket{\phi}\ket{\phi}$ such that $(\mathbb{I}_{\vee^2\mathbb{C}^d}-2P_S)\ket{\psi}\ket{\psi}=\ket{\phi}\ket{\phi}$, we have $2P_S\ket{\psi}\ket{\psi}=\ket{\psi}\ket{\psi}-\ket{\phi}\ket{\phi}$. $P_S\ket{\psi}\ket{\psi}\neq 0$ since there is no product state in $S^{\perp}$. On the other hand, $P_S\ket{\psi}\ket{\psi}$ is a vector in $S$ which is a subspace completely void of states with rank no more than $2$. This contradicts our assumption.

In this subsection, we will construct a subspace $S$ to satisfy the above two constraints for any $d\geq 5$. A family of BUEs will follow immediately.

Let $S$ be the span of the following vectors.
\begin{eqnarray*}
&&\ket{11}+\ket{23}+\ket{32},\\
&&\ket{22}+\ket{34}+\ket{43},\\
&&\textrm{\ \ \ \ \ \ \ \ \ }\cdots,\\
&&\ket{d-2,d-2}+\ket{d-1,d}+\ket{d,d-1},\\
&&\ket{d-1,d-1}+\ket{d,1}+\ket{1,d},\\
&&\ket{d,d}+\ket{12}+\ket{21}.
\end{eqnarray*}

We first show there is no product state in $S^{\perp}$. Assume $\ket{\psi}\ket{\psi}\perp S$ where $\ket{\psi}=\sum\limits_{i=1}^d a_i \ket{i}$. The orthogonality implies the following equations.

\begin{eqnarray*}\label{eq:set1}
(E1)\left\{ \begin{array}{lll}
a_1^2+2a_2a_3&=&0,\\
a_2^2+2a_3a_4&=&0,\\
&\vdots&\\
a_d^2+2a_1a_2&=&0.
\end{array}\right.
\end{eqnarray*}

The only common solution to the above equations is  $(a_1,a_2,\cdots,a_d)=(0,0,\cdots,0)$ when $d\geq 3$. See Appendix~\ref{appendix:solution_1} for details.

Hence, there is no product state in $S^{\perp}$.

Next, we will verify that $\text{rank} \ket{\psi}\geq 3$ for any $\ket{\psi}\in S$.

Assume there is some state $\ket{\psi}\in S$ with rank no more than $2$. Let's say 
\begin{eqnarray}
&&c_1(\ket{11}+\ket{23}+\ket{32}),\\
&+&c_2(\ket{22}+\ket{34}+\ket{43}),\\
&+&\textrm{\ \ \ \ \ \ \ \ \ }\cdots ,\\
&+&c_d(\ket{d,d}+\ket{12}+\ket{21}),\\
&=&(x_1\ket{1}+x_2\ket{2}+\cdots +x_d\ket{d})(x_1\ket{1}+x_2\ket{2}+\cdots +x_d\ket{d}),\\
&+&(y_1\ket{1}+y_2\ket{2}+\cdots +y_d\ket{d})(y_1\ket{1}+y_2\ket{2}+\cdots +y_d\ket{d}).
\end{eqnarray}

Then we have the following equations.
\begin{eqnarray}\label{eq:set2}
(E2)\left\{\begin{array}{lll}
x_1^2+y_1^2&=&c_1,\\
x_2^2+y_2^2&=&c_2,\\
&\vdots&\\
x_d^2+y_d^2&=&c_d,\\
x_1x_2+y_1y_2&=&c_d,\\
x_2x_3+y_2y_3&=&c_1,\\
&\vdots&\\
x_{d}x_1+y_{d}y_1&=&c_{d-1},\\
x_ix_j+y_iy_j&=&0 \forall |i-j|\geq 2.
\end{array}\right.
\end{eqnarray}

There is no nonzero $(c_0,c_2,\cdots,c_d)$ satisfying the above equations when $d\geq 5$. See Appendix~\ref{appendix:solution_2} for details. Hence $\text{rank} \ket{\psi}\geq 3$ for any $\ket{\psi}\in S$. 

This implies that $U=I-2P_S$ is a bosonic universal entangler for any $d\geq 5$.

\subsubsection{Permutation Universal Entanglers}
Any product state can be written as the following.
\begin{eqnarray}
\ket{\phi}\ket{\phi}
&=&(\sum\limits_{i=1}^d a_i \ket{i})(\sum\limits_{j=1}^d a_j \ket{j})\\
&=&\sum\limits_{i,j=1}^d a_ia_j\ket{ij}\\
&=&\sum\limits_{i=1}^d a_i^2 \ket{ii}+\sum\limits_{1\leq i<j\leq d} \sqrt{2}a_ia_j(\frac{\ket{ij}+\ket{ji}}{\sqrt{2}}).
\end{eqnarray}

Any bosonic state $\ket{\psi}\in \vee^2 \mathbb{C}^d$ can be denoted as a ${d+1 \choose 2}$-dimensional vector 
\begin{equation*}
(x_{11},x_{22},\cdots,x_{dd},x_{12},\cdots,x_{1d}, x_{21},\cdots,x_{d-1 d})
\end{equation*}
since we can always write $\ket{\psi}$ as a linear combination of bosonic basis states
\begin{equation*}
x_{11}\ket{11}+x_{22}\ket{22}+\cdots+x_{dd}\ket{dd}+x_{12}\frac{\ket{12}+\ket{21}}{\sqrt{2}}+x_{13}\frac{\ket{13}+\ket{31}}{\sqrt{2}}+\cdots+x_{d-1,d}\frac{\ket{d-1,d}+\ket{d,d-1}}{\sqrt{2}}.
\end{equation*}

$\ket{\psi}$ is a product state if and only if there exists some nonzero vector $(a_1,a_2,\cdots, a_d)$ such that
\begin{eqnarray}
&&(x_{11},\cdots,x_{dd},x_{12},x_{13},\cdots,x_{1d}, x_{23},\cdots,x_{d-1 d})\\
&=&(a_1^2,\cdots,a_d^2, \sqrt{2} a_1a_2,\sqrt{2} a_1a_3,\cdots,\sqrt{2}a_1a_d,\sqrt{2}a_2 a_3,\cdots, \sqrt{2} a_{d-1} a_d).
\end{eqnarray}

A permutation matrix $U$ acting on the ${d+1\choose 2}$-dimensional vector space is certainly a bosonic quantum gate. 

For any $d\geq 3$, let's define a permutation matrix $U$ as the following:
\begin{eqnarray*}
U&=&\sum\limits_{i=1}^d(\frac{\ket{i,i+1}+\ket{i+1,i}}{\sqrt{2}})\bra{ii}+\sum\limits_{i=1}^d\ket{ii}(\frac{\bra{i,i+1}+\bra{i+1,i}}{\sqrt{2}})\\
&&+\sum\limits_{1\leq i<i+1<j\leq d}\frac{(\ket{ij}+\ket{ji})(\bra{ij}+\bra{ji})}{2}.
\end{eqnarray*}
Here the addition and subtraction are all modulo $d$, but the results range from $1$ to $d$.

$U$ is a unitary matrix since it is simply a rotation of the ${d+1\choose 2}$-dimensional vector space. Let's assume $U$ will map some (bosonic) product state to another (bosonic) product state. Without loss of generality, let's assume
\begin{eqnarray*}
U(\sum\limits_{i=1}^d a_i^2 \ket{ii}+\sum\limits_{1\leq i<j\leq d} \sqrt{2}a_ia_j(\frac{\ket{ij}+\ket{ji}}{\sqrt{2}}))=\sum\limits_{i=1}^d b_i^2 \ket{ii}+\sum\limits_{1\leq i<j\leq d} \sqrt{2}b_ib_j(\frac{\ket{ij}+\ket{ji}}{\sqrt{2}}).
\end{eqnarray*}

It follows that 
\begin{eqnarray*}
a_1^2&=&\sqrt{2} b_1b_2,\\
a_2^2&=&\sqrt{2} b_2b_3,\\
&\vdots& ,\\
a_d^2&=& \sqrt{2} b_db_1,\\
\sqrt{2} a_1a_2&=& b_1^2,\\
\sqrt{2} a_2a_3&=& b_2^2,\\
&\vdots& ,\\
\sqrt{2} a_da_1&=& b_d^2.
\end{eqnarray*}

Hence we have $\prod\limits_{i=1}^d a_i^2=(\sqrt{2})^d \prod\limits_{i=1}^d b_ib_{i+1}=\sqrt{2}^d \prod\limits_{i=1}^d b_i^2$. Similarly, $\prod\limits_{i=1}^d b_i^2=\sqrt{2}^d \prod\limits_{i=1}^d a_i^2$. The above two equations imply that there exists some $1\leq t\leq d$ such that $a_t=0$.

The equation $b_t^2=\sqrt{2} a_ta_{t+1}$ implies $b_t=0$. Then $a_{t-1}^2=\sqrt{2} b_{t-1}b_t=0$ will implies $a_{t-1}=0$. By repeating the above procedure, we will eventually have $a_i=0$ for any $1\leq i\leq d$. This contradicts our assumption that $U$ will map some (bosonic) product state to another (bosonic) product state. Hence $U$ is a bosonic universal entangler.

\section{Fermionic Universal Entanglers}

Given  a bipartite system of indisitinguishable fermions $\wedge^2 \mathbb{C}^{d}$, a $2$-vector in $\wedge^2 \mathbb{C}^{d}$ is said to be decomposable if it can be written as an exterior product of individual vectors from $\mathbb{C}^{d}$. Decomposable $2$-vectors are also considered to be unentangled states in this fermionic system.

We say a quantum gate $U$ is a fermionic universal entangler (FUE) if $U$ will transform every product state to some entangled state.

\begin{theorem}\label{theorem:fermionic}
There is some FUE acting on a bipartite system of indisitinguishable fermions $\wedge^2 \mathbb{C}^{d}$ if and only if  $d\geq 8$. Furthermore, almost every quantum gate acting on $\wedge^2 \mathbb{C}^{d}$is an FUE when $d\geq 8$.
\end{theorem}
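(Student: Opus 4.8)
The plan is to mirror the proof of Theorem~\ref{theorem:bosonic} verbatim in its logical skeleton, replacing the Veronese variety $\Lambda$ of bosonic product states by the image of the Grassmannian $G(2,d)$ under the Pl\"ucker embedding $\tau$, which (as reviewed in the preliminaries) is precisely the projective variety of decomposable states inside $\mathbb{P}(\wedge^2\mathbb{C}^d)$. Write $G:=\tau(G(2,d))$ for this irreducible projective variety. Everything hinges on two numerical inputs: $\dim G=\dim G(2,d)=2(d-2)$, and $\dim\mathbb{P}(\wedge^2\mathbb{C}^d)={d\choose 2}-1$ (since $\dim\wedge^2\mathbb{C}^d={d\choose 2}$).

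For the non-existence direction I would invoke the dimension counting theorem exactly as in the bosonic case. For any quantum gate $U$ acting on $\wedge^2\mathbb{C}^d$, both $U(G)$ and $G$ are irreducible projective varieties of dimension $2(d-2)$ in $\mathbb{P}^{{d\choose 2}-1}$, so the theorem forces $U(G)\cap G\neq\emptyset$ whenever $2\cdot 2(d-2)\geq {d\choose 2}-1$. A short computation rewrites this inequality as $d^2-9d+14\leq 0$, i.e. as $2\leq d\leq 7$. Note in particular that the boundary $d=7$ gives exact equality $4(d-2)={d\choose 2}-1$, so the non-strict form of the theorem still forces an intersection there; hence no FUE exists for any $d\leq 7$, with $d=7$ handled cleanly as the equality case.

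For existence and prevalence I would set $\mathcal{X}'=\{\Phi\in GL(\wedge^2\mathbb{C}^d):\Phi(G)\cap G\neq\emptyset\}$ and bound $\dim\overline{\mathcal{X}'}$ through the incidence variety $\mathcal{I}=\{(\Phi,p,q):p,q\in G,\ \Phi(p)=q\}$. For a fixed pair $(p,q)$, the condition $\Phi(p)=q$ projectively is linear in $\Phi$ and carves out a family of dimension ${d\choose 2}^2-({d\choose 2}-1)$, while the base $G\times G$ has dimension $2\cdot 2(d-2)$; projecting $\mathcal{I}$ to $GL$ then yields $\dim\overline{\mathcal{X}'}\leq {d\choose 2}^2-({d\choose 2}-1)+2\cdot 2(d-2)$, the fermionic analogue of Lemma~\ref{lemma:closure}. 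When $d\geq 8$ the strict inequality $4(d-2)<{d\choose 2}-1$ holds, so $\dim\overline{\mathcal{X}'}<{d\choose 2}^2=\dim GL(\wedge^2\mathbb{C}^d)$. Using that $\mathcal{U}(\wedge^2\mathbb{C}^d)$ is Zariski dense in $GL(\wedge^2\mathbb{C}^d)$, the hypothesis $\mathcal{U}\subseteq\mathcal{X}'$ would give $GL=\overline{\mathcal{U}}\subseteq\overline{\mathcal{X}'}$, contradicting the dimension bound, so an FUE must exist. The same bound shows $\overline{\mathcal{X}'}\cap\mathcal{U}$ is a finite union of proper submanifolds and hence measure zero, giving that almost every gate is an FUE.

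The main obstacle is the incidence-variety dimension bound, i.e. the fermionic analogue of Lemma~\ref{lemma:closure}, because this is where the geometry of $G$ genuinely enters and where I must verify that the generic fibre of $\mathcal{I}\to G\times G$ really has dimension ${d\choose 2}^2-({d\choose 2}-1)$ rather than jumping on special strata. Once that bound is in hand, everything reduces to the single arithmetic fact that $4(d-2)$ and ${d\choose 2}-1$ cross exactly between $d=7$ (equality, forcing non-existence) and $d=8$ (strict, forcing existence), which pins the threshold at $d\geq 8$ with no sporadic exceptions of the $(3,3)$ type that arose for distinguishable particles. I would not attempt an explicit FUE, since (unlike the bosonic coherent-state case) the decomposable states have no linear structure and verification remains intractable.
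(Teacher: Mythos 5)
Your proposal is correct and follows essentially the same route as the paper: the same dimension count $2\cdot 2(d-2)$ vs.\ ${d\choose 2}-1$ for non-existence when $2\leq d\leq 7$ (with $d=7$ as the equality case), and the same incidence-variety bound $\dim\overline{\mathcal{X}'}\leq {d\choose 2}^2-({d\choose 2}-1)+4(d-2)$ combined with Zariski density of the unitary group for existence and prevalence when $d\geq 8$. The ``main obstacle'' you flag (constancy of the fibre dimension) is resolved exactly as in the paper's Lemma~\ref{lemma:closure}: since $GL(\wedge^2\mathbb{C}^d)$ acts transitively on $\mathbb{P}(\wedge^2\mathbb{C}^d)$, all fibres of the incidence variety over $G\times G$ are isomorphic to the same stabilizer-type set of dimension ${d\choose 2}^2-({d\choose 2}-1)$, so no jumping occurs.
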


\begin{proof}
Let $\Gamma_{d}=\{\ket{\phi}\in \wedge^2 \mathbb{C}^{d}: \ket{\phi}=\ket{\psi_1}\wedge \ket{\psi_2} \textrm{\ \ for some\ }\ket{\psi_1}, \ket{\psi_2}\in \mathbb{C}^{d} \}$. A quantum gate $U$ is an FUE if and only if 
\begin{eqnarray}
U(\Gamma_{d})\bigcap \Gamma_{d}= \emptyset.
\end{eqnarray}

Observe that decomposable $2$-vectors in $\wedge^2 \mathbb{C}^{d}$ correspond to weighted $2$-dimensional linear subspaces of $\mathbb{C}^{d}$. If we ignore the phase factor, decomposable $2$-vectors can be characterized by the Grassmannian of $2$-dimensional subspaces of $\mathbb{C}^{d}$, an algebraic subvariety of the projective space $\mathbb{P}(\wedge^2 \mathbb{C}^{d})$\cite{Hartshorne:1983we}. We will denote the Grassmannian of $r$-dimensional subspaces of $\mathbb{C}^d$ as $G(r,d)$.

First, we examine the necessary condition.

According to the intersection theorem, if $\dim U(\Gamma_{d})+\dim \Gamma_{d} \geq \dim \mathbb{P}(\wedge^2 \mathbb{C}^{d})$, or equivalently, $2\times2(d-2)=2\dim G(2,d)\geq {d \choose 2}-1$, then for any $U$, $U(\Gamma_{d})\bigcap \Gamma_{d}\neq \emptyset$. This inequality holds only for $2\leq d \leq 7$ which implies the fermionic universal entangling device does not exist for $d\leq 7$.

Now, let's look into the sufficient condition.

The set of quantum gates acting on a bipartite system of indisitinguishable fermions $\wedge^2 \mathbb{C}^d$ is the unitary group acting on $\wedge^2\mathbb{C}^d$, denoted as $\mathcal{U}(\wedge^2\mathbb{C}^d)$.

Similarly, let $\mathcal{Y}=\{\Phi|\Phi\in \mathcal{U}(\wedge^2 \mathbb{C}^d),\Phi(\Gamma_{d})\cap \Gamma_{d}\neq \emptyset\}$. We will show that $\mathcal{Y}$ is a proper subset in $\mathcal{U}(\wedge^2 \mathbb{C}^d)$. 

Again, let's consider the Zariski topology on the projective space. In this setting, the unitary group $\mathcal{U}(\wedge^2 \mathbb{C}^d)$ is Zariski dense in the general linear group $GL(\wedge^2 \mathbb{C}^d)$\cite{Schmitt:2008ww}. We further define $\mathcal{Y}^{\prime}=\{\Phi|\Phi\in GL(\wedge^2\mathbb{C}^d),\Phi(\Gamma_{d})\cap \Gamma_{d}\neq \emptyset\}$. It is easy to see $\mathcal{X}\subseteq \mathcal{Y}^{\prime}$. 

Similar to the proof of Lemma~\ref{lemma:closure} in Appendix~\ref{appendix:closure}, the dimension of $\mathcal{Y}^{\prime}$'s  Zariski closure $\dim \overline{\mathcal{Y}^{\prime}}$ is bounded by ${d\choose 2}^2-({d\choose 2}-1)+2\times2(d-2)$. 

Now we prove the existence of an FUE $U$ as follows. If it does not exist, $\mathcal{U}(\wedge^2 \mathbb{C}^d)\subset \mathcal{Y}^{\prime}$, then $GL(\wedge^2\mathbb{C}^d)=\overline{\mathcal{U}(\wedge^2\mathbb{C}^d)}\subset{\overline{\mathcal{Y}^{\prime}}}$. However,  $\dim(\overline{\mathcal{Y}^{\prime}})\leq {d\choose 2}^2-({d\choose 2}-1)+4(d-2) < {d\choose 2}^2 = \dim GL(\wedge^2\mathbb{C}^d)$. This is a contradiction. So $\mathcal{U}(\wedge^2\mathbb{C}^d)\not\subset \mathcal{Y}^{\prime}$, i.e. an FUE $\Phi\in\mathcal{U}(\wedge^2\mathbb{C}^d)$ exists.

Following the lines of the proof of Theorem~\ref{theorem:bosonic}, we can prove that $\mathcal{Y}$ is not only a proper subset, but also a neglectable subset in $\mathcal{U}(\wedge^2\mathbb{C}^d)$.
\end{proof}

\section{Summary and Discussion}
Employing properties of algebraic geometry, we have shown that for bipartite systems of indistinguishable bosons with a Hilbert space that is the symmetric subspace of $\mathbb{C}^d\otimes\mathbb{C}^d$, bosonic universal entanglers (BUEs) exist if and only if $d \geq 3$. Similarly, we have shown that for bipartite systems of indistinguishable fermions with a Hilbert space that is the antisymmetric subspace of $\mathbb{C}^d\otimes\mathbb{C}^d$, fermionic universal entanglers (FUEs) exist if and only if $d \geq 8$. These two results are in contrast to previous results regarding bipartite systems of distinguishable particles with a Hilbert space $\mathbb{C}^{d_1}\otimes\mathbb{C}^{d_2}$, for which universal entanglers exist if and only if min $(d_1, d_2) \geq 3$ and $(d_1,d_2) \neq (3,3)$.  This illustrates some of the important differences between the entanglement of systems of indistinguishable particles to the entanglement of systems of distinguishable particles. 

In contrast, we have illustrated one feature which holds for both systems of distinguishable and indistinguishable particles. Previous work has shown that, for systems of distinguishable particles, if a universal entangler exists for some Hilbert space, then almost all unitaries operating on that space are universal entanglers. We have shown that this result also holds for systems of indistinguishable bosons and fermions. However to verify whether or not a given bipartite unitary is a universal entangler is in general an intractable problem for both distinguishable particle systems and fermionic systems. This intractability arises from the fact that solving a system of quadratic equations is, in general, NP-hard.

Bosonic systems turn out to be special though. Because the set of all product states is isomorphic to a linear vector space, it is possible to use elementary methods to verify bosonic universal entanglers. We have given explicit constructions of two types of BUE, one is of the Householder type which is valid for $d\geq 5$ and the other is of a permutation type which is valid for $d\geq 3$. Both are very simple constructions.

It is our hope that our success in finding explicit constructions of BUEs will help inform the search for explicit constructions of both FUEs and UEs, problems which remain intractable in general. We can not rule out the possibiliy that there might be some other structure, beyond just the corresponding general algebraic varieties, which would provide some special family of explicitly verifiable UEs or FUEs. In fact, the explicit construction for the $(3,4)$ system from an order $12$ Hadamard matrix demonstrated in~\cite{Chen:2012} provides a hint of the possibility of such families. 

Another natural direction of inquiry is to explore the entangling power of these BUEs and FUEs. As demonstrated in~\cite{Chen:2012}, a random unitary is not only almost surely a UE, but it also almost surely maps the set of product states to another set which contains nothing but nearly maximally entangled states, with respect to almost any kind of entanglement measure. One would expect similar properties for BUEs and FUEs. However to go further in that direction one would need to first establish reasonable entanglement measures for bosonic and fermionic systems (see, e.g. entanglement measures discussed in~\cite{ESB+02}).

Finally, it would be useful to generalize these results to multipartite bosonic and fermionc systems. Our guess is that the bosonic systems might remain easy to solve since they retain the nice property that the set of all product states is isomorphic to a linear vector space. The fermionic case is expected to be much more complicated given that in the multipartite case even the Grassmann-Pl\"ucker relations themselves are harder to describe~\cite{Bourbaki:1970,Hasset:2007,ESB+02}. We would leave these cases for future investigation.

\subparagraph*{Acknowledgements}
JK is supported by NSERC. JC is supported by NSERC, UTS-AMSS Joint Research Laboratory for Quantum Computation and Quantum Information Processing and NSF of China (Grant No. 61179030). BZ is supported by NSERC and CIFAR.

\appendix
\section{There Is No Nonzero Solution For Polynomial System~(E1)}\label{appendix:solution_1}
Here we will show there is no nonzero solution $(a_1,\cdots, a_d)$ satisfying the following equations.
\begin{eqnarray*}\label{eq:set1}
(E1)\left\{ \begin{array}{lll}
a_1^2+2a_2a_3&=&0,\\
a_2^2+2a_3a_4&=&0,\\
&\vdots&\\
a_d^2+2a_1a_2&=&0.
\end{array}\right.
\end{eqnarray*}

Assume $a_i\neq 0$, then $a_{i+1}, a_{i+2}$ are nonzero. This follows all $a_i$'s are nonzero.
\begin{eqnarray}
\mathop{\Pi}\limits_{i=1}^d a_i^2=\mathop{\Pi}\limits_{i=1}^d(-2 a_{i+1}a_{i+2})=(-2)^d\mathop{\Pi}\limits_{i=1}^d a_i^2.
\end{eqnarray}

This implies $\mathop{\Pi}\limits_{i=1}^d a_i^2=0$. Hence it is a contradiction. 

Therefore, the only solution to this polynomial system is $(a_1,\cdots,a_d)=(0,\cdots,0)$.

\section{There Is No Nonzero Solution For Polynomial System~(E2)}\label{appendix:solution_2}
Here we will show there is no nonzero solution $(c_0,c_1,\cdots,c_d)$ satisfying the following equations.

\begin{eqnarray}\label{eq:set2}
(E2)\left\{\begin{array}{lll}
x_1^2+y_1^2&=&c_1,\\
x_2^2+y_2^2&=&c_2,\\
&\vdots&\\
x_d^2+y_d^2&=&c_d,\\
x_1x_2+y_1y_2&=&c_d,\\
x_2x_3+y_2y_3&=&c_1,\\
&\vdots&\\
x_{d}x_1+y_{d}y_1&=&c_{d-1},\\
x_ix_j+y_iy_j&=&0 \forall |i-j|\geq 2.
\end{array}\right.
\end{eqnarray}

For $d\geq 5$, let's assume there is some $1\leq i\leq d$ such that $x_i=0$ and $y_i\neq 0$. 

It follows from $x_i x_j+y_iy_j=0$ for any $|j-i|\geq 2$ that $y_j=0$ for any $|j-i|\geq 2$. So, $y_{i+2}=y_{i+3}=0$.

Then, $0\neq x_i^2+y_i^2=x_{i+1}x_{i+2}+y_{i+1}y_{i+2}=x_{i+1}x_{i+2}$. This implies $x_{i+1},x_{i+2}\neq 0$.

From $x_{i+1}x_{i+3}+y_{i+1}y_{i+3}=0$, we have $x_{i+3}=0$. So, $0\neq x_{i+2}^2+y_{i+2}^2=x_{i+3}x_{i+4}+y_{i+3}y_{i+4}=0$. This is a contradiction.

So, for any $1\leq i\leq d$, we have $x_i=y_i=0$ or $x_i y_i\neq 0$.

Let's look into the various situations.

\begin{enumerate}
\item [1.]  There is some $i$ such that $x_i=y_i=0$. Then we have $x_{i-1}^2+y_{i-1}^2=0$. If $x_{i-1}=y_{i-1}=0$, we consider $x_{i-2}^2+y_{i-2}^2=0$. By repeating this procedure, if all $x_j$'s,$y_j$'s are not all zero, we will find some $i'$ such that $y_{i'}=ix_{i'}$ or $y_{i'}=-i x_{i'}$ and $x_{i'+1}=y_{i'+1}=0$. We will further have $y_{i'+k}=\pm x_{i'+k}$ for any $k=2,\cdots, d-1$. This implies that $(c_1,\cdots,c_d)=0$. 
\item [2.] All $x_i$'s, $y_i$'s are nonzero. For any fixed $i$, $\frac{y_j}{x_j}=-\frac{x_i}{y_i}$ for any $j=i+2,\cdots,i+d-2$.  This implies $\frac{y_k}{x_k}$ is a constant $i$ or $-i$. This also implies $(c_1,\cdots, c_d)=0$.
\end{enumerate} 

\section{Proof of Lemma~\ref{lemma:closure}}\label{appendix:closure}
\begin{lemma}\label{lemma:closure}
$\dim(\overline{\mathcal{X}^{\prime}})\leq {d+1\choose 2}^2-({d+1\choose 2}-1)+2(d-1)$, where $\overline{\mathcal{X}^{\prime}}$ is the Zariski closure of $\mathcal{X}^{\prime}$.
\end{lemma}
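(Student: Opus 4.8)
The plan is to realize $\mathcal{X}^{\prime}$ as the image of an incidence variety and bound its dimension by a fibre-counting argument, the standard technique for such parameter spaces. Set $n={d+1\choose 2}=\dim\vee^2\mathbb{C}^d$, so that $GL(\vee^2\mathbb{C}^d)$ is an open subvariety of the matrix space $\mathrm{Mat}_{n\times n}(\mathbb{C})$ of dimension $n^2$, and recall that $\Lambda$, the Veronese variety, is a projective subvariety of $\mathbb{P}(\vee^2\mathbb{C}^d)=\mathbb{P}^{n-1}$ of dimension $d-1$. The membership condition $\Phi(\Lambda)\cap\Lambda\neq\emptyset$ says precisely that there exist projective points $p,q\in\Lambda$ with $\Phi(p)=q$, where the equality is understood up to a nonzero scalar.

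First I would introduce the incidence variety
\[
\mathcal{I}=\{(\Phi,p,q)\in GL(\vee^2\mathbb{C}^d)\times\Lambda\times\Lambda:\ \Phi(\tilde p)\in\mathrm{span}(\tilde q)\},
\]
where $\tilde p,\tilde q$ denote representative vectors of the projective points $p,q$; the displayed condition is independent of the choice of these representatives. By construction $\mathcal{X}^{\prime}$ is exactly the image of $\mathcal{I}$ under the first projection $\rho(\Phi,p,q)=\Phi$. Since the image of a variety under a morphism has dimension at most that of the source, it suffices to bound $\dim\mathcal{I}$.

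Next I would analyze the second projection $\pi:\mathcal{I}\to\Lambda\times\Lambda$, $(\Phi,p,q)\mapsto(p,q)$. For a fixed pair $(p,q)$ the fibre is the set of $\Phi\in GL(\vee^2\mathbb{C}^d)$ with $\Phi(\tilde p)\in\mathrm{span}(\tilde q)$. Because the map $\Phi\mapsto\Phi(\tilde p)$ is a linear surjection $\mathrm{Mat}_{n\times n}(\mathbb{C})\to\mathbb{C}^n$ for $\tilde p\neq 0$, the preimage of the line $\mathrm{span}(\tilde q)$ is a linear subspace of codimension $n-1$, i.e. of dimension $n^2-(n-1)$; intersecting with the dense open locus $GL$ leaves a set of the same dimension, which is nonempty since one can always extend $\tilde p\mapsto\tilde q$ to an invertible map. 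Thus every fibre of $\pi$ has the same dimension $n^2-(n-1)$, and the fibre-dimension theorem gives
\[
\dim\mathcal{I}=\dim(\Lambda\times\Lambda)+\bigl(n^2-(n-1)\bigr)=2(d-1)+n^2-(n-1).
\]
Substituting $n={d+1\choose 2}$ and combining with $\dim\overline{\mathcal{X}^{\prime}}\le\dim\mathcal{I}$ yields the claimed bound.

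The main obstacle is the fibre computation and its clean application: one must verify that the condition $\Phi(\tilde p)\in\mathrm{span}(\tilde q)$ cuts out exactly codimension $n-1$ (neither more nor less) inside $GL$, that the fibres are uniformly of this dimension and nonempty over all of $\Lambda\times\Lambda$ so that the dimensions add with no correction term, and that the projective normalization of the proportionality condition is handled correctly. Once equidimensionality of the fibres is established, the remaining steps are formal applications of the fibre-dimension theorem and the image-dimension inequality.
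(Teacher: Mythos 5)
Your proposal is correct and follows essentially the same route as the paper: the paper's set $F^{-1}(\Lambda)\subseteq GL(\vee^2\mathbb{C}^d)\times\Lambda$ is exactly your incidence variety $\mathcal{I}$ (with $q$ determined by $(\Phi,p)$), the map $\Psi(g,[z])=([z],[g\cdot z])$ is your projection $\pi$, and the fibre over $(z_1,z_2)$ is computed in both cases as a coset of the codimension-$(n-1)$ set of invertible matrices sending the line of $\tilde p$ into the line of $\tilde q$, giving ${d+1\choose 2}^2-({d+1\choose 2}-1)+2(d-1)$. The only cosmetic difference is that you assert equidimensional fibres and equality for $\dim\mathcal{I}$, whereas the paper only needs (and only uses) the upper bound from the fibre-dimension inequality.
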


The following technical lemmas will be needed.

\begin{lemma}[\cite{Tauvel:2005us}]\label{lemma:dim1}
If $Z_1$ and $Z_2$ are both irreducible varieties over $\mathbb{C}$, and $\phi:Z_1\rightarrow Z_2$ is a dominant morphism, then $\dim(Z_2)\leq \dim(Z_1)$. Here, dominant means $\Phi(Z_1)$ is dense in $Z_2$.
\end{lemma}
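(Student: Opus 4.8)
The plan is to prove this standard dimension inequality via the correspondence between the dimension of an irreducible variety and the transcendence degree of its function field over $\mathbb{C}$. Recall the foundational fact from dimension theory that for any irreducible variety $Z$ one has $\dim(Z)=\operatorname{trdeg}_{\mathbb{C}}\mathbb{C}(Z)$, where $\mathbb{C}(Z)$ is the field of rational functions on $Z$ and $\operatorname{trdeg}$ denotes transcendence degree. This identity reduces the geometric assertion to a purely algebraic one about a tower of field extensions of $\mathbb{C}$.

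First I would observe that the dominant morphism $\phi\colon Z_1\rightarrow Z_2$ induces a pullback homomorphism $\phi^{*}\colon \mathbb{C}(Z_2)\rightarrow \mathbb{C}(Z_1)$ on function fields, sending a rational function $f$ to $f\circ\phi$. The crucial claim is that $\phi^{*}$ is a well-defined \emph{injective} homomorphism of fields. Well-definedness holds because $\phi(Z_1)$ meets the (dense open) domain of definition of any $f$, by dominance, so $f\circ\phi$ is again a rational function on $Z_1$. Injectivity is exactly where density is essential: if $\phi^{*}(f)=0$, then $f$ vanishes on the dense image $\phi(Z_1)$, hence on all of $Z_2$ by irreducibility, so $f=0$. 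Since a nonzero homomorphism of fields is automatically injective, this in fact follows once one checks $\phi^{*}(1)=1$, but the density argument is what guarantees $\phi^{*}$ does not collapse the target.

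With the embedding $\phi^{*}\colon \mathbb{C}(Z_2)\hookrightarrow \mathbb{C}(Z_1)$ established, I would invoke the monotonicity of transcendence degree along an extension $K\subseteq L$ of field extensions of $\mathbb{C}$: any $\mathbb{C}$-algebraically independent subset of $K$ remains algebraically independent in $L$, whence $\operatorname{trdeg}_{\mathbb{C}}K\leq\operatorname{trdeg}_{\mathbb{C}}L$. Applying this to $\mathbb{C}(Z_2)\subseteq\mathbb{C}(Z_1)$ gives $\operatorname{trdeg}_{\mathbb{C}}\mathbb{C}(Z_2)\leq\operatorname{trdeg}_{\mathbb{C}}\mathbb{C}(Z_1)$, and translating back through the dimension-equals-transcendence-degree identity yields $\dim(Z_2)\leq\dim(Z_1)$, as claimed.

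The point requiring the most care is the injectivity of $\phi^{*}$ and the verification that the pullback genuinely lands in the function field $\mathbb{C}(Z_1)$ rather than in some larger ring of functions defined only on a subset. Both hinge precisely on the dominance hypothesis: without density of $\phi(Z_1)$ in $Z_2$, a nonzero rational function on $Z_2$ could pull back to zero and the inequality would break down. The remaining ingredients, namely the identification of dimension with transcendence degree and the monotonicity of transcendence degree, are standard facts of commutative algebra that can be cited directly.
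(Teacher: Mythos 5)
Your proof is correct, and it is essentially the paper's ``proof'': the paper does not argue this lemma at all but quotes it from \cite{Tauvel:2005us}, where the standard justification is exactly your function-field argument ($\dim = \operatorname{trdeg}_{\mathbb{C}}$ of the function field for an irreducible variety, dominance gives an injective pullback $\phi^{*}\colon \mathbb{C}(Z_2)\hookrightarrow \mathbb{C}(Z_1)$, and transcendence degree is monotone along field extensions). The only micro-imprecision is that $f$ is defined just on its dense open domain, so $\phi^{*}(f)=0$ gives vanishing of $f$ on $\phi(Z_1)\cap\operatorname{dom}(f)$, which is dense in $\operatorname{dom}(f)$; closedness of the zero locus then yields $f=0$, so your argument goes through unchanged.
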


\begin{lemma}[\cite{Tauvel:2005us}]\label{lemma:dim2}
If $Z_1$ and $Z_2$ are both varieties over $\mathbb{C}$, and $\phi:Z_1\rightarrow Z_2$ is a morphism, then $\dim(Z_1)\leq \dim(Z_2)+\max\limits_{z\in Z_2}{\dim(\phi^{-1}(z))}$.
\end{lemma}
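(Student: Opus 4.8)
The plan is to reduce the general statement (for possibly reducible $Z_1$ and possibly non-dominant $\phi$) to the classical theorem on the dimension of a \emph{generic} fiber of a dominant morphism of \emph{irreducible} varieties, and then to invoke that theorem. Throughout I write $Y := \overline{\phi(Z_1)}$ for the Zariski closure of the image and adopt the convention $\dim \emptyset = -\infty$, so that points of $Z_2$ with empty fiber never contribute to the maximum; if $Z_1 = \emptyset$ the inequality is trivial, so assume $Z_1 \neq \emptyset$.

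First I would dispose of the reducibility of $Z_1$. Write $Z_1 = \bigcup_{i=1}^k W_i$ as the finite union of its irreducible components, so that $\dim Z_1 = \max_i \dim W_i$. For each $i$ the restriction $\phi|_{W_i}$ has fibers $(\phi|_{W_i})^{-1}(z) = \phi^{-1}(z) \cap W_i \subseteq \phi^{-1}(z)$, hence $\dim (\phi|_{W_i})^{-1}(z) \le \dim \phi^{-1}(z) \le \max_{z' \in Z_2} \dim \phi^{-1}(z')$ for every $z$. Therefore it suffices to prove the inequality with $Z_1$ replaced by each irreducible $W_i$; taking the maximum over $i$ afterwards recovers the stated bound. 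So I may assume $Z_1$ is irreducible.

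Next I would reduce to a dominant morphism onto an irreducible base. Since $Z_1$ is irreducible and $\phi$ is continuous, $\phi(Z_1)$ is irreducible, and hence so is its closure $Y$. The corestriction $\phi : Z_1 \to Y$ is dominant, the fibers $\phi^{-1}(z)$ over points $z \in Y$ are unchanged, and $\dim Y \le \dim Z_2$ because $Y \subseteq Z_2$. Thus it is enough to show $\dim Z_1 \le \dim Y + \max_{z \in Y} \dim \phi^{-1}(z)$ for a dominant morphism of irreducible varieties, after which $\dim Y \le \dim Z_2$ completes the argument. Here I would apply the classical generic fiber dimension theorem: there is a nonempty Zariski-open $U \subseteq Y$ on which every fiber has dimension exactly $\dim Z_1 - \dim Y$, and by Chevalley's theorem the constructible set $\phi(Z_1)$ is dense in $Y$ and so contains a nonempty open subset, whence $U \cap \phi(Z_1) \neq \emptyset$. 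Choosing any $z_0$ in this intersection gives $\dim \phi^{-1}(z_0) = \dim Z_1 - \dim Y$, so that $\dim Z_1 = \dim Y + \dim \phi^{-1}(z_0) \le \dim Y + \max_{z} \dim \phi^{-1}(z)$, as required.

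The substantive input, and the step I expect to be the main obstacle if a fully self-contained treatment is wanted, is the generic fiber dimension theorem itself. Its standard proof combines additivity of transcendence degree — the generic fiber over the function field $K(Y)$ has dimension $\operatorname{trdeg}_{K(Y)} K(Z_1) = \dim Z_1 - \dim Y$ — with a spreading-out argument, using the upper semicontinuity of fiber dimension, that upgrades this statement about the generic fiber to one valid over a dense open $U \subseteq Y$. Since this is a classical result (cf.\ \cite{Tauvel:2005us}), I would cite it rather than reprove it, and present the two reduction steps above as the content that specializes it to the reducible, non-dominant form in which the lemma is stated.
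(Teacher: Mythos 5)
Your proposal is correct, but there is nothing in the paper to compare it against: the paper does not prove this lemma at all, stating it as a quoted result from \cite{Tauvel:2005us} and using it as a black box in the proof of Lemma~\ref{lemma:closure}. Your two reduction steps are sound --- decomposing $Z_1$ into irreducible components (with $\dim Z_1=\max_i\dim W_i$ and $(\phi|_{W_i})^{-1}(z)\subseteq\phi^{-1}(z)$), then corestricting to the irreducible closure $Y=\overline{\phi(Z_1)}$ with $\dim Y\le\dim Z_2$ --- and delegating the remaining content to the classical fiber dimension theorem is exactly where the substance of the cited result lies, so citing rather than reproving it is appropriate. One simplification worth noting: you do not need the generic form of the theorem (exact fiber dimension on a dense open $U\subseteq Y$) nor Chevalley's constructibility theorem to locate a point of $U\cap\phi(Z_1)$. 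The cheaper half of the fiber dimension theorem --- for a dominant morphism of irreducible varieties, every nonempty fiber $\phi^{-1}(z_0)$ with $z_0\in\phi(Z_1)$ has all components of dimension at least $\dim Z_1-\dim Y$ --- applied to any single point $z_0$ of the (nonempty) image already gives $\dim Z_1\le\dim Y+\dim\phi^{-1}(z_0)\le\dim Z_2+\max_{z\in Z_2}\dim\phi^{-1}(z)$, which is all the inequality requires.
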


Lemma~$\ref{lemma:dim1}$ and Lemma~\ref{lemma:dim2} establish a connection between the dimensions of domain and codomain of a variety morphism.

\begin{proof}
We have a morphism $F:GL(\vee^2\mathbb{C}^d)\times \mathbb{P}^{{d+1\choose 2}-1} \rightarrow \mathbb{P}^{{d+1\choose 2}-1}$ which is just the left
action of $GL(\vee^2\mathbb{C}^d)$ on  $\mathbb{P}^{{d+1\choose 2}-1}$, defined by $F(g,[w])=[g\cdot w]$.

We let $y_0=(1,0,\cdots,0)$ be a row vector with $d+1\choose 2$ entries, and for any given $y_1$, $y_2\in \mathbb{P}^{{d+1\choose 2}-1}$, we choose proper $g_1$ and $g_2\in GL(\vee^2\mathbb{C}^d)$, such that $[g_1\cdot y_0]=[y_1]$ and $[g_2\cdot y_0]=[y_2]$. Then we have 
\begin{eqnarray}
[g\cdot y_2]=[y_1] \iff  [g g_2\cdot y_0]=[g g_1\cdot y_0] \iff & [g_1^{-1}gg_2\cdot y_0]=[y_0].
\end{eqnarray}

From the above observations, F has the following property: for any $y_1$, $y_2\in \mathbb{P}^{{d+1\choose 2}-1}$,
$F^{-1}(y_2)\cap \{GL(\vee^2\mathbb{C}^d)\times \{y_1\}\}\cong \{\left(\begin{array}{cc}
z_1& \alpha\\
0 & g'
\end{array}\right):z_1\in \mathbb{C}\backslash \{0\}, g'\in GL({d+1\choose 2}-1), \alpha\in \mathbb{C}^{{d+1\choose 2}-1} \ is \ a \ row \
vector.\}$. Hence $\dim(F^{-1}(y_2)\cap {GL(\vee^2\mathbb{C}^d)\times
\{y_1\}})={d+1\choose 2}^2-({d+1\choose 2}-1)$.

Let $P_1$, $P_2$ be projections of $GL(\vee^2\mathbb{C}^d)\times \mathbb{P}^{{d+1\choose 2}-1}$ to $GL(\vee^2\mathbb{C}^d)$, $\mathbb{P}^{{d+1\choose 2}-1}$ respectively.
Now we only look at $GL(\vee^2\mathbb{C}^d)\times \Lambda\subseteq GL(\vee^2\mathbb{C}^d)\times \mathbb{P}^{{d+1\choose 2}-1}$, to get
$F:GL(\vee^2\mathbb{C}^d)\times \Lambda\rightarrow \mathbb{P}^{{d+1\choose 2}-1}$. Then we have a characterization of $\mathcal{X}^{\prime}$:
$\mathcal{X}^{\prime}=P_1F^{-1}(\Lambda)$. In fact
\begin{align*}
& g\in \mathcal{X}^{\prime} \\
\iff & g(\Lambda)\cap \Lambda\neq \emptyset  \\
\iff &\exists z_1, z_2\in \Lambda, s.t. g(z_1)=z_2 \\
\iff &\exists z_1, z_2\in \Lambda, s.t. (g, z_1)\in F^{-1}(z_2)\\
\iff &\exists z_2\in \Lambda, s.t. g\in P_1F^{-1}(z_2)\\
\iff & g\in P_1 F^{-1}(\Lambda).
\end{align*}
So $\overline{\mathcal{X}^{\prime}}\subseteq GL(\vee^2\mathbb{C}^d)$ is the Zariski closure of $\mathcal{X}^{\prime}$, which is also an algebraic variety.

Next, we assert that 
$P_1: F^{-1}(\Lambda)\rightarrow \overline{\mathcal{X}^{\prime}}$ is a dominant morphism.

Furthermore, consider $\Psi: F^{-1}(\Lambda)\rightarrow \Lambda\times \Lambda$ given by $\Psi(g,[z])=([z],[g\cdot z])$.

For $\forall z_1\in \Lambda$, $z_2\in \Lambda$, we have $\Psi^{-1}(z_1,z_2)= (g_2 T g_1^{-1}, z_1)$, where $T=\{\left(\begin{array}{cc}
z_0& \alpha\\
0 & g'
\end{array}\right):z_0\in \mathbb{C}\backslash \{0\}, g'\in GL({d+1\choose 2}-1), \alpha\in \mathbb{C}^{{d+1\choose 2}-1} \ is \ a \ row \
vector\}$, and $g_1, g_2\in GL(\vee^2\mathbb{C}^d)$, s.t. $g_1(y_0)=z_1$, $g_2(y_0)=z_2$.
So this is a dominant morphism. Then we obtain
\begin{align*}
\dim(F^{-1}(\Lambda))
\leq & \dim(T)+\dim(\Lambda\times \Lambda)\\
=& {d+1\choose 2}^2-({d+1\choose 2}-1)+\dim(\Lambda)+\dim(\Lambda).
\end{align*}

It is required in Lemma~\ref{lemma:dim1} that varieties $Z_1$ and $Z_2$ be irreducible. Actually, this condition can be weakened. Lemma~\ref{lemma:dim1} is still true for the more general case that $Z_1$ and $Z_2$ are closed subsets of irreducible varieties\cite{Hartshorne:1983we}. Through this approach, we can fill out the gap and apply this lemma without danger of confusion. Indeed, the irreduciblity of $Z_1$ and $Z_2$ really holds, but verification of this is not easy. 

Then from Lemma~\ref{lemma:dim1} and Lemma~\ref{lemma:dim2}, we will have
\begin{align*}
\dim(\overline{\mathcal{X}^{\prime}})\leq &\dim(F^{-1}(\Lambda))\\
\leq &
{d+1\choose 2}^2-({d+1\choose 2}-1)+\dim(\Lambda)+\dim(\Lambda)\\
=& {d+1\choose 2}^2-({d+1\choose 2}-1)+2(d-1).
\end{align*}
\end{proof}

\bibliography{lipics-UE}







\end{document}